\newtheorem{lemma}{Lemma}
\newtheorem{corollary}{Corollary}
\newtheorem{theorem}{Theorem}
\title{Capacity Region of $K$-User Discrete Memoryless Interference Channels with a Mixed Strong-Very Strong Interference}
\begin{document}

\author{
\authorblockN{G. Abhinav}
\authorblockA{Dept. of ECE, Indian Institute of Science \\
Bangalore 560012, India\\
Email: abhig\_88@ece.iisc.ernet.in
}
\and
\authorblockN{B. Sundar Rajan}
\authorblockA{Dept. of ECE, Indian Institute of Science, \\Bangalore 560012, India\\
Email: bsrajan@ece.iisc.ernet.in
}
}

\maketitle
\thispagestyle{empty}	

\begin{abstract}
The capacity region of the 3-user Gaussian Interference Channel (GIC) with mixed strong-very strong interference was established in \cite{ChS}. The mixed strong-very strong interference conditions considered in \cite{ChS} correspond to the case where, at each receiver, one of the interfering signals is strong and the other is very strong. In this paper, we derive the capacity region of $K$-user $(K\geq 3)$ Discrete Memoryless Interference Channels (DMICs) with a mixed strong-very strong interference. This  corresponds to the case where, at each receiver one of the interfering signals is strong and the other $(K-2)$ interfering signals are very strong. This includes, as a special case, the 3-user DMIC with mixed strong-very strong interference. The proof is specialized to the 3-user GIC case and hence an alternative simpler derivation for the capacity region of the 3-user GIC with mixed strong-very strong interference is provided.

\end{abstract}	
\section{INTRODUCTION}
The capacity of a general K-user interference channel has been open for decades. The capacity region for the 2-user Gaussian Interference Channel (GIC) with strong interference was established in \cite{HS} and the capacity region for the 2-user discrete memoryless interference channel (DMIC) with strong interference was derived in \cite{CoE}.  The sum-capacity of the 2-user GIC was obtained for a noisy interference regime \cite{SKC1,AnV,MoK} where, treating interference as noise at each receiver achieves the sum-capacity. In general, for the 2-user GIC, the capacity region is known within a gap of one bit \cite{ETW}.

Recently, there has also been some progress in characterizing the capacity region of the interference channel for more than 2-users. In \cite{SJVJ}, lattice codes were used to achieve the capacity region of the K-user symmetric Gaussian very strong interference channel. In \cite{SKC2}, a noisy interference regime of the K-user IC was derived as an extension of the 2-user result. In \cite{JoV}, the sum-capacity of K-user degraded GIC has been derived and the scheme that achieves this sum-capacity is shown to be successive interference cancellation.

This paper is based on the work in \cite{ChS} where, the capacity region of the 3-user GIC with mixed strong-very strong interference has been derived. This was defined as the condition where, at each receiver, one of the interfering signals is strong and the other interfering signal is very strong. 

The contributions and organization of this paper are as follows:
\begin{itemize}
\item The capacity region of the 3-user DMIC with mixed strong-very strong interference is established (Theorem \ref{thm1} and Theorem \ref{thm2} in Section \ref{sec3}).
\item The capacity region of the 3-user GIC with mixed strong-very strong interference is established (Corollary \ref{cor1} in Section \ref{sec4}). Our proof for this is much simpler than the proof in \cite{ChS}.
\item The capacity region for the 3-user DMIC with mixed strong-very strong interference is generalized to the $K$-user scenario $(K\geq 3)$ where, at each receiver, one of the interfering signals is strong and the other $(K-2)$ interfering signals are very strong\footnote{For $K$=$2$, the capacity region given in Theorem \ref{thm3} reduces to the capacity region of GIC with strong interference \cite{HS}.} (Theorem \ref{thm3} in Section \ref{sec5}).
\end{itemize}

In the next section, we present the channel model.

\textit{\textbf{Notations:}} Realization of an alphabet $X$ is denoted as $x$. The probability distribution on the alphabet $X$ is denoted by $p_X(x)$. ${\cal CN}(0,\sigma^2)$ represents circularly symmetric complex Gaussian noise with mean $0$ and variance $\sigma^2$. For a random-variable $Q$, $|Q|$ denotes the cardinality of the support-set from which $Q$ can take values. $\mathbb{C}$ denotes the set of complex numbers.

\section{CHANNEL MODEL} 
\label{sec2}
The 3-user DMIC model considered in this paper is shown in Fig. \ref{fig:DMIC}.
\begin{figure}[htbp]
\vspace{-1cm}
\centering
\includegraphics[totalheight=4in,width=3.0in]{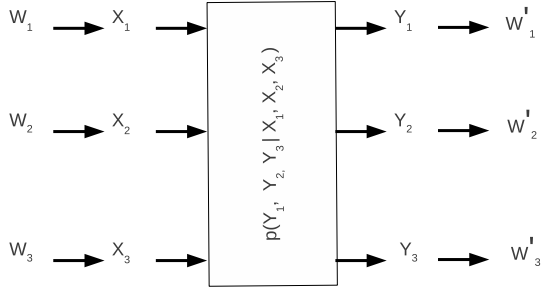}
\vspace{-4cm}
\caption{3-User DMIC Model}	
\label{fig:DMIC}	
\end{figure}
The channel input from User-$i$ is denoted by $X_i$ and $Y_i$ is the channel output at Receiver-$i$, $i$ $\in$ $\{1, 2, 3\}$, and all of them take values from finite alphabets. User-$i$ intends to communicate with Receiver-$i$ at rate $R_i$, $i$ $\in$ $\{1,2,3\}$, through a memoryless channel with transition probability $p(Y_1 ,Y_2 ,Y_3 |X_1 ,X_2 ,X_3)$. User-$i$, $i$ $\in$ $\{1, 2, 3\}$, encodes its independent message $W_i$ into a codeword of length $n$, i.e., $X_i^n$. We assume that the message $W_i$, $i$ $\in$ $\{1,2,3\}$, is uniformly distributed over a set of cardinality $2^{nR_i}$.

Receiver-$i$, $i$ $\in$ $\{1,2,3\}$, assigns an estimate $W_i^\prime$ to each received sequence $Y_i^n$. The average probability of error is defined by $P_e^{(n)} = P\{(W_1^{\prime}, W_2^{\prime}, W_3^{\prime}) \neq (W_1, W_2, W_3)\}$. A rate triplet $(R_1, R_2, R_3)$ is said to be achievable for the DMIC if there exists a sequence of $(2^{nR_1}, 2^{nR_2}, 2^{nR_3}, n)$ codes with $P_e^{(n)} \rightarrow 0$. The capacity region of the DMIC is the closure of the set of all achievable rate triplets $(R_1, R_2, R_3)$.

The codes, achievable rates and average probability of error can be similarly defined for the $K$-user case.

\section{CAPACITY REGION OF $3$-USER DMIC WITH MIXED STRONG-VERY STRONG INTERFERENCE} 
\label{sec3}
In this section, we derive the capacity region of the 3-user DMIC with mixed strong-very strong interference, i.e., each receiver is constrained to receive a strong interference and a very strong interference. For the sake of clarity, we shall consider $2$ cases of the 3-user DMIC with mixed strong-very strong interference separately, as stated below (these are the only cases possible for the 3-user DMIC).
\begin{enumerate}
\item In the first case, each user causes a strong interference at one of the unintended receivers and a very strong interference at the other unintended receiver while satisfying the constraint that each receiver has to receive a strong interference and a very strong interference.
\item In the second case, one of the three users produces strong interference at both the unintended receivers, the second user causes very strong interference at both the unintended receivers and the remaining user produces a strong interference at one of the unintended receivers and a very strong interference at the other unintended receiver while satisfying the constraint that each receiver has to receive a strong interference and a very strong interference.
\end{enumerate}
Precise definitions for very strong interference and strong interference for Case $1$ are given in (\ref{eqn1})-(\ref{eqn3}) and (\ref{eqn4})-(\ref{eqn6}) respectively and for Case $2$ in (\ref{eqn13})-(\ref{eqn15}) and (\ref{eqn16})-(\ref{eqn18}) respectively. 

The following lemma will be useful in proving the results.
\begin{lemma}[\cite{CoE}]
\label{lemma1}
If $I[X_1;Y_1|X_2,X_3] \leq I[X_1;Y_2|X_2,X_3]$ $\forall ~p_{X_1}(x_1)p_{X_2}(x_2)p_{X_3}(x_3)$, then $I[X_1^n;Y_1^n|X_2^n,X_3^n] \leq I[X_1^n;Y_2^n|X_2^n,X_3^n]~ \forall ~n \geq 1$.
\end{lemma}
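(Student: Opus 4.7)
The plan is to prove the inequality by a telescoping argument that swaps a $Y_1$-component for a $Y_2$-component one time-index at a time. For $i = 0, 1, \ldots, n$, define
\[
I_i \triangleq I[X_1^n\,;\, Y_{1,1},\ldots,Y_{1,i},Y_{2,i+1},\ldots,Y_{2,n} \,|\, X_2^n, X_3^n],
\]
so that $I_0 = I[X_1^n;Y_2^n|X_2^n,X_3^n]$ and $I_n = I[X_1^n;Y_1^n|X_2^n,X_3^n]$. It is enough to show $I_i - I_{i-1} \leq 0$ for every $i \in \{1,\ldots,n\}$.

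First I would strengthen the single-letter hypothesis in two stages. Since $I[X_1;Y_j|X_2,X_3]$ depends on the joint law of $(X_1,X_2,X_3)$ only through $p(x_1|x_2,x_3)$ after averaging over $(x_2,x_3)$, specializing $p_{X_2}$ and $p_{X_3}$ to point masses in the given inequality extends it from product distributions to every joint distribution $p(x_1,x_2,x_3)$. A further averaging then promotes this to a conditional form: for any auxiliary $U$ such that $U - (X_1,X_2,X_3) - (Y_1,Y_2)$ is a Markov chain,
\[
I[X_1;Y_1|X_2,X_3,U] \leq I[X_1;Y_2|X_2,X_3,U].
\]

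Next I would compute $I_i - I_{i-1}$. The two mutual informations share the common output tuple $Z_i \triangleq (Y_{1,1},\ldots,Y_{1,i-1},Y_{2,i+1},\ldots,Y_{2,n})$, so the difference collapses to
\[
I_i - I_{i-1} = I[X_1^n;Y_{1,i}|Z_i,X_2^n,X_3^n] - I[X_1^n;Y_{2,i}|Z_i,X_2^n,X_3^n].
\]
Memorylessness of the channel reduces $X_1^n$ on both sides to $X_{1,i}$. Then the conditional-form hypothesis, applied with $(X_1,X_2,X_3,Y_1,Y_2)$ identified with $(X_{1,i},X_{2,i},X_{3,i},Y_{1,i},Y_{2,i})$ and with $U$ equal to $Z_i$ together with all $X_{2,j}, X_{3,j}$ for $j \neq i$, yields $I_i - I_{i-1} \leq 0$. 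Chaining through $i = 1,\ldots,n$ gives $I_n \leq I_0$, which is exactly the claimed inequality.

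The main obstacle will be verifying that this choice of $U$ satisfies the required Markov chain $U - (X_{1,i},X_{2,i},X_{3,i}) - (Y_{1,i},Y_{2,i})$; this is the essential use of memorylessness. Given the time-$i$ inputs, the time-$i$ channel noise is independent of all other noises, so $(Y_{1,i},Y_{2,i})$ decouples from the past outputs $Y_{1,1},\ldots,Y_{1,i-1}$, from the future outputs $Y_{2,i+1},\ldots,Y_{2,n}$ (which depend on inputs and noises at times other than $i$), and from the inputs at other time-indices. Once this is settled, the remainder is routine bookkeeping.
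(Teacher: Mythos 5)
Your proof is correct. The paper offers no argument of its own for this lemma---it simply defers to Lemma 1 of Costa and El Gamal---and your telescoping interpolation $I_0, I_1, \ldots, I_n$, preceded by the two-stage strengthening of the single-letter hypothesis (first to arbitrary joint laws via point masses on $(X_2,X_3)$, then to the conditional form with an auxiliary $U$ satisfying $U - (X_1,X_2,X_3) - (Y_1,Y_2)$), is precisely the standard argument that citation points to, so this is essentially the same approach.
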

\begin{proof}
This can be proved in the same way as Lemma 1 in \cite{CoE}.
\end{proof}

\textbf{Case 1:} Without loss of generality, we assume that User-$1$ causes a very strong interference at Receiver-$2$ and strong interference at Receiver-$3$, User-$2$ causes a very strong interference at Receiver-$3$ and strong interference at Receiver-$1$, and User-$3$ causes a very strong interference at Receiver-$1$ and strong interference at Receiver-$2$, i.e., the conditions

{\small
\begin{align}
\label{eqn1}
&I[X_1;Y_2] \geq I[X_1;Y_1|X_2,X_3]\\
\label{eqn2}
&I[X_2;Y_3] \geq I[X_2;Y_2|X_3,X_1]\\
\label{eqn3}
&I[X_3;Y_1] \geq I[X_3;Y_3|X_1,X_2]\\
\label{eqn4}
&I[X_1;Y_1|X_2,X_3] \leq I[X_1;Y_3|X_2,X_3]\\
\label{eqn5}
&I[X_2;Y_2|X_3,X_1] \leq I[X_2;Y_1|X_3,X_1]\\
\label{eqn6}
&I[X_3;Y_3|X_1,X_2] \leq I[X_3;Y_2|X_1,X_2]
\end{align}}are satisfied $\forall$ $p_{X_1}(x_1)p_{X_2}(x_2)p_{X_3}(x_3)$. The equations (\ref{eqn1})-(\ref{eqn3}) represent the very strong interference conditions and (\ref{eqn4})-(\ref{eqn6}) represent the strong interference conditions. Now, we establish the capacity region of this channel.
\begin{theorem}
\label{thm1}
The capacity region of the 3-user DMIC with mixed strong-very strong interference, satisfying (\ref{eqn1})-(\ref{eqn6}) is given by

{\small
\begin{align}
\label{eqn7}
&R_1<I[X_{1};Y_{1}|X_{2},X_{3},Q]\\
\label{eqn8}
&R_2<I[X_{2};Y_{2}|X_{3},X_{1},Q]\\
\label{eqn9}
&R_3<I[X_{3};Y_{3}|X_{1},X_{2},Q]\\
\label{eqn10}
&R_1+R_2<I[X_{1},X_{2};Y_{1}|X_{3},Q]\\
\label{eqn11}
&R_2+R_3<I[X_{2},X_{3};Y_{2}|X_{1},Q]\\
\label{eqn12}
&R_3+R_1<I[X_{3},X_{1};Y_{3}|X_{2},Q]
\end{align}}for some joint distribution $p_Q(q)p_{X_1|Q}(x_1|q)p_{X_2|Q}(x_2|q)$ $p_{X_3|Q}(x_3|q)$, where, $Q$ is the time-sharing random-variable with $|Q| \leq 7$.
\end{theorem}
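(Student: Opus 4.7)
The plan is to prove achievability and the converse separately, with the cardinality bound $|Q|\le 7$ disposed of at the end by a routine Fenchel--Eggleston/Caratheodory argument applied to the six functionals in (\ref{eqn7})--(\ref{eqn12}).

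For achievability I would use ordinary random coding with independent codebooks for the three private messages, drawn i.i.d.\ from $p_{X_i\mid Q}(\cdot\mid q)$, and let each receiver jointly decode all three codewords via joint typicality. At Receiver~$j$ the packing lemma yields the seven error-event bounds $\sum_{i\in S}R_i<I(X_S;Y_j\mid X_{S^c},Q)$ for every nonempty $S\subseteq\{1,2,3\}$, giving $21$ inequalities in all. The core task is then to check that under (\ref{eqn1})--(\ref{eqn6}) each inequality that is not already in (\ref{eqn7})--(\ref{eqn12}) is implied by those that are. For example, at Receiver~$1$ the constraint $R_2<I(X_2;Y_1\mid X_1,X_3,Q)$ is implied by (\ref{eqn8}) through the strong condition (\ref{eqn5}); the constraint $R_3<I(X_3;Y_1\mid X_1,X_2,Q)$ is implied by (\ref{eqn9}) through the very-strong condition (\ref{eqn3}) combined with the independence-based monotonicity $I(X_3;Y_1\mid X_1,X_2,Q)\ge I(X_3;Y_1\mid Q)\ge I(X_3;Y_3\mid X_1,X_2,Q)$; and the ``mixed" sum bound $R_1+R_3<I(X_1,X_3;Y_1\mid X_2,Q)$ follows from (\ref{eqn7}) plus $R_3<I(X_3;Y_1\mid X_2,Q)$ by the chain rule. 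The full-sum $R_1+R_2+R_3$ bound at each receiver is handled by one further chain-rule step, and the symmetric redundancies at Receivers~$2$ and~$3$ are verified in the same way.

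For the converse I would proceed bound by bound. The single-rate bound (\ref{eqn7}) follows from Fano's inequality together with a genie handing $(X_2^n,X_3^n)$ to Receiver~$1$: using the independence of $W_1$ from the other users' codewords,
\begin{equation*}
nR_1\le I(W_1;Y_1^n)+n\epsilon_n\le I(X_1^n;Y_1^n\mid X_2^n,X_3^n)+n\epsilon_n,
\end{equation*}
and the standard time-sharing single-letterization introduces $Q$. For the pairwise bound (\ref{eqn10}) I would apply Fano at Receivers~$1$ and~$2$, hand $X_3^n$ as a genie to both, and move the $W_2$ term from $Y_2^n$ to $Y_1^n$ using the $n$-letter lift of (\ref{eqn5}) provided by Lemma~\ref{lemma1}:
\begin{align*}
n(R_1+R_2)&\le I(X_1^n;Y_1^n\mid X_3^n)+I(X_2^n;Y_2^n\mid X_1^n,X_3^n)+n\epsilon_n\\
&\le I(X_1^n;Y_1^n\mid X_3^n)+I(X_2^n;Y_1^n\mid X_1^n,X_3^n)+n\epsilon_n\\
&= I(X_1^n,X_2^n;Y_1^n\mid X_3^n)+n\epsilon_n,
\end{align*}
after which standard single-letterization yields (\ref{eqn10}). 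The bounds (\ref{eqn11}) and (\ref{eqn12}) follow from the same template using (\ref{eqn6}) and (\ref{eqn4}) respectively, with $X_1^n$ and $X_2^n$ playing the role of the genie.

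The Fano/genie steps and the $n$-letter lifts through Lemma~\ref{lemma1} are routine; the hard part will be the achievability bookkeeping, i.e.\ certifying that all $21$ joint-typicality constraints collapse to the six listed. The most delicate cases are the ``mixed" bounds of the form $R_i+R_k<I(X_i,X_k;Y_j\mid X_\ell,Q)$ at a receiver $j$ for which $X_k$ is the very-strong interferer, because there one must first combine the very-strong condition with the interference-channel monotonicity $I(X_k;Y_j\mid X_\ell,Q)\ge I(X_k;Y_j\mid Q)$ before the chain rule can finish the job. I expect this to be the main technical obstacle.
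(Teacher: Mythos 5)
Your proposal is correct. The converse is essentially the paper's own argument: Fano's inequality plus a genie handing $X_3^n$ to both receivers, the $n$-letter lift of the strong-interference condition (\ref{eqn5}) via Lemma~\ref{lemma1} to move $I[X_2^n;Y_2^n|X_1^n,X_3^n]$ over to $Y_1^n$, the chain rule, and standard single-letterization; you even pair (\ref{eqn6}) with (\ref{eqn11}) and (\ref{eqn4}) with (\ref{eqn12}), which is the correct correspondence. Where you genuinely diverge is achievability. The paper uses successive decoding: each receiver first decodes its very-strong interferer's message treating everything else as noise (the rate conditions for this are guaranteed by (\ref{eqn1})--(\ref{eqn3})), and then performs two-user MAC decoding of its own message and the strong interferer's, so only the six extra conditions in (\ref{redund1}) and (\ref{redund2}) need to be dismissed as redundant. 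You instead have every receiver jointly decode all three messages, which produces $21$ packing constraints and requires showing that the $15$ not listed in (\ref{eqn7})--(\ref{eqn12}) are implied by them under (\ref{eqn1})--(\ref{eqn6}). Your reductions are sound: the key step $I(X_3;Y_1|X_1,X_2,Q)\ge I(X_3;Y_1|Q)\ge I(X_3;Y_3|X_1,X_2,Q)$, where the first inequality uses independence of the inputs given $Q$ and the second is (\ref{eqn3}), is exactly what is needed, and the mixed and full-sum bounds then follow by the chain rule as you describe. The trade-off is that simultaneous decoding avoids committing to a decoding order but inflates the bookkeeping --- for the $K$-user extension in Theorem~\ref{thm3} it would mean $K(2^K-1)$ constraints, whereas the paper's successive peeling of the $K-2$ very-strong interferers keeps the count linear in $K$ --- so the paper's architecture is the one that scales, while yours is the more canonical single-shot scheme.
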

\begin{proof}
We shall first prove the converse. 

\textit{Converse:} Equations (\ref{eqn7})-(\ref{eqn9}) are the usual outer bounds. Now,

{\small
\begin{align}
\nonumber
n(R_1+R_2) &\stackrel{(a)}{<} I[X_1^n;Y_1^n]+I[X_2^n;Y_2^n]+n \epsilon_n\\
\nonumber
&<I[X_1^n;Y_1^n,X_3^n]+I[X_2^n;Y_2^n,X_1^n,X_3^n]+n \epsilon_n\\
\nonumber
&\stackrel{(b)}{=}I[X_1^n;Y_1^n|X_3^n]+I[X_2^n;Y_2^n|X_1^n,X_3^n]+n \epsilon_n\\
\nonumber
&\stackrel{(c)}{\leq}I[X_1^n;Y_1^n|X_3^n]+I[X_2^n;Y_1^n|X_1^n,X_3^n]+n \epsilon_n\\
\nonumber
&=I[X_1^n,X_2^n;Y_1^n|X_3^n]+n \epsilon_n\\
\nonumber
&\stackrel{(d)}{=} h(Y_1^n|X_3^n)-\sum_{i=1}^{n} h(Y_{1i}|X_{1i},X_{2i},X_{3i})+n \epsilon_n\\
\nonumber
&\stackrel{(e)}{<} \sum_{i=1}^{n}\left( h(Y_{1i}|X_{3i})- h(Y_{1i}|X_{1i},X_{2i},X_{3i}) \right)\\
\nonumber
&~~~~~~~~~~~~~~~~~~~~~~~~~~~~~~~~~~~~~~~~~~~~~~~~~+n \epsilon_n\\
\nonumber
&=\sum_{i=1}^{n}I[X_{1i},X_{2i};Y_{1i}|X_{3i}]+n \epsilon_n\\
\nonumber
&=n\frac{1}{n}\sum_{i=1}^{n}I[X_{1i},X_{2i};Y_{1i}|X_{3i}]+n \epsilon_n\\
\nonumber
\Rightarrow R_1+R_2&< I[X_1,X_2;Y_1|X_3,Q]+ \epsilon_n
\end{align}}where, $(a)$ follows from Fano's inequality, $(b)$ follows from the fact that $X_1^n$, $X_2^n$ and $X_3^n$ are independent, $(c)$ follows from (\ref{eqn5}) and Lemma \ref{lemma1}, $(d)$ follows from the memoryless property of the channel and $(e)$ follows from the fact that removing conditioning increases entropy. Finally, taking the limit as $n \rightarrow \infty$, $P_e^{(n)} \rightarrow 0$, we have

{\small
\begin{align}
\nonumber
R_1+R_2&< I[X_1,X_2;Y_1|X_3,Q].
\end{align}}
Similarly, using (\ref{eqn4}) and (\ref{eqn6}), we can easily derive the outer bounds in (\ref{eqn11}) and (\ref{eqn12}) respectively.

\textit{Achievability:} Fix $p_{Q}(q)p_{X_1|Q}(x_1|q)p_{X_2|Q}(x_2 |q)$ $p_{X_3|Q}(x_3 |q)$. At User-$i$, $i \in \{1,2,3\}$, generate $2^{nR_i}$ independent codewords $X_i^n (W_{ij})$, $j \in \{1,2,3,..2^{nR_i}\}$, of length $n$, generating each element i.i.d. $\sim \prod_{k=1}^{n} p_{X_i|Q} (x_{ik}|q)$, where, $W_{ij}$ indicates message-$j$ at User-$i$. First decode User-$1$'s message at Receiver-$2$, User-$2$'s message at Receiver-$3$ and User-$3$'s message at Receiver-$1$ by the usual weak-typical set decoding. They can be decoded with arbitrarily small probability of error if

{\small
\begin{align}
\label{redund1}
&R_1<I[X_1;Y_2|Q];~R_2<I[X_2;Y_3|Q];~R_3<I[X_3;Y_1|Q].
\end{align}}
Now, given that the messages of User-$1$, User-$2$ and User-$3$ are known at Receiver-$2$, Receiver-$3$ and Receiver-$1$ respectively, we perform the usual MAC-type decoding \cite{CoT-book} on messages of User-$1$ and User-$2$ at Receiver-$1$, User-$2$ and User-$3$ at Receiver-$2$ and User-$3$ and User-$1$ at Receiver-$3$. They can be decoded with arbitrarily small probability of error if (\ref{eqn7})-(\ref{eqn12}) and

{\small
\begin{align}
\nonumber
&R_1 < I[X_1;Y_3|X_2,X_3,Q]\\
\label{redund2}
&R_2 < I[X_2;Y_1|X_3,X_1,Q]\\
\nonumber
&R_3 < I[X_3;Y_2|X_1,X_2,Q]
\end{align}}are satisfied. The conditions in (\ref{redund1}) and (\ref{redund2}) are redundant because (\ref{eqn1})-(\ref{eqn6}) are satisfied $\forall ~p_{X_1}(x_1)p_{X_2}(x_2)p_{X_3}(x_3)$. The cardinality of $Q$ follows from direct application of Caratheodory Theorem \cite{CoT-book}.
\end{proof}

\textbf{Case 2:} Without loss of generality, we assume that User-$1$ causes strong interference at both Receiver-$2$ and Receiver-$3$, User-$3$ causes very strong interference at both Receiver-$1$ and Receiver-$2$ while User-$2$ causes strong interference at Receiver-$1$ and very strong interference at Receiver-$3$, i.e., the conditions

{\small
\begin{align}
\label{eqn13}
&I[X_3;Y_2] \geq I[X_3;Y_3|X_1,X_2]\\
\label{eqn14}
&I[X_2;Y_3] \geq I[X_2;Y_2|X_3,X_1]\\
\label{eqn15}
&I[X_3;Y_1] \geq I[X_3;Y_3|X_1,X_2]\\
\label{eqn16}
&I[X_1;Y_1|X_2,X_3] \leq I[X_1;Y_3|X_2,X_3]\\
\label{eqn17}
&I[X_2;Y_2|X_3,X_1] \leq I[X_2;Y_1|X_3,X_1]\\
\label{eqn18}
&I[X_1;Y_1|X_2,X_3] \leq I[X_1;Y_2|X_2,X_3]
\end{align}}are satisfied $\forall$ $p_{X_1}(x_1)p_{X_2}(x_2)p_{X_3}(x_3)$. The equations (\ref{eqn13})-(\ref{eqn15}) represent the very strong interference conditions and (\ref{eqn16})-(\ref{eqn18}) represent the strong interference conditions. Now, we establish the capacity region of this channel.
\begin{theorem}
\label{thm2}
The capacity region of the 3-user DMIC with mixed strong-very strong interference, satisfying (\ref{eqn13})-(\ref{eqn18}) is given by

{\small
\begin{align}
\label{eqn19}
&R_1<I[X_{1};Y_{1}|X_{2},X_{3},Q]\\
\label{eqn20}
&R_2<I[X_{2};Y_{2}|X_{3},X_{1},Q]\\
\label{eqn21}
&R_3<I[X_{3};Y_{3}|X_{1},X_{2},Q]\\
\label{eqn22}
&R_1+R_2<\min\{I[X_{1},X_{2};Y_{1}|X_{3},Q], I[X_{1},X_{2};Y_{2}|X_{3},Q]\}\\
\label{eqn23}
&R_3+R_1<I[X_{3},X_{1};Y_{3}|X_{2},Q]
\end{align}}for some joint distribution $p_Q(q)p_{X_1|Q}(x_1|q)p_{X_2|Q}(x_2|q)$ $p_{X_3|Q}(x_3|q)$, where, $Q$ is the time-sharing random-variable with $|Q| \leq 7$.
\end{theorem}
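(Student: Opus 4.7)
The plan is to follow the same converse-plus-achievability template used for Theorem \ref{thm1}, with two notable differences: the sum bound on $R_1+R_2$ in (\ref{eqn22}) appears as a minimum of two single-letter quantities, and there is no sum bound on $R_2+R_3$. The single-rate bounds (\ref{eqn19})-(\ref{eqn21}) follow from the cut-set argument combined with Fano's inequality, exactly as in (\ref{eqn7})-(\ref{eqn9}).

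For the converse of the two branches of (\ref{eqn22}), I apply the chain of steps $(a)$-$(e)$ used in Theorem \ref{thm1} but route the genie side-information and the strong-interference swap differently on each branch. Starting from $n(R_1+R_2) \leq I[X_1^n;Y_1^n] + I[X_2^n;Y_2^n] + n\epsilon_n$ via Fano, one path proceeds exactly as in Theorem \ref{thm1} by conditioning on $X_3^n$ throughout and using (\ref{eqn17}) with Lemma \ref{lemma1} to replace $I[X_2^n;Y_2^n|X_1^n,X_3^n]$ by $I[X_2^n;Y_1^n|X_1^n,X_3^n]$, yielding the first branch $I[X_1,X_2;Y_1|X_3,Q]$. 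The alternative path adds $(X_2^n,X_3^n)$ as side information in the $X_1^n$ term and only $X_3^n$ in the $X_2^n$ term, then invokes (\ref{eqn18}) with Lemma \ref{lemma1} to replace $I[X_1^n;Y_1^n|X_2^n,X_3^n]$ by $I[X_1^n;Y_2^n|X_2^n,X_3^n]$, yielding the second branch $I[X_1,X_2;Y_2|X_3,Q]$. The sum bound (\ref{eqn23}) on $R_1+R_3$ follows analogously by adding $X_2^n$ as side information and invoking (\ref{eqn16}) with Lemma \ref{lemma1} to swap $Y_1^n$ for $Y_3^n$ in the $X_1^n$ term, collapsing into $I[X_1,X_3;Y_3|X_2,Q]$ after the usual single-letterization.

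For achievability, fix $p_Q(q)p_{X_1|Q}(x_1|q)p_{X_2|Q}(x_2|q)p_{X_3|Q}(x_3|q)$ and generate i.i.d.\ codebooks at each user. The decoding schedule reflects the asymmetry of Case 2: first, the very-strong-interferer message is decoded at each interfered receiver, namely $W_3$ at Receiver-$1$ and Receiver-$2$ (enabled by (\ref{eqn13}) and (\ref{eqn15})) and $W_2$ at Receiver-$3$ (enabled by (\ref{eqn14})); then each receiver performs MAC-type decoding on the remaining two users, producing (\ref{eqn22}) from the MAC stages at Receiver-$1$ and Receiver-$2$, and (\ref{eqn23}) from the MAC stage at Receiver-$3$. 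Every auxiliary single-rate and cross-rate constraint arising from the MAC and very-strong stages is rendered redundant by one of (\ref{eqn13})-(\ref{eqn18}); for instance, $R_1<I[X_1;Y_2|X_2,X_3,Q]$ coming out of Receiver-$2$'s MAC is implied by (\ref{eqn19}) through (\ref{eqn18}), and $R_3<I[X_3;Y_1|Q]$ from the first stage is implied by (\ref{eqn21}) through (\ref{eqn15}), in exact parallel with the redundancy discussion around (\ref{redund1}) and (\ref{redund2}). The cardinality bound $|Q|\leq 7$ comes from Caratheodory's theorem, corresponding to the seven active rate-region inequalities.

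The main subtlety is the absence of any $R_2+R_3$ sum constraint. On the achievability side this is automatic, since no MAC stage in the above schedule couples $R_2$ and $R_3$: Receiver-$3$ jointly decodes $(W_1,W_3)$ after peeling off $W_2$, while Receivers-$1$ and $2$ jointly decode $(W_1,W_2)$ after peeling off $W_3$. On the converse side, the asymmetry of Case 2 leaves no strong-interference hypothesis of the form $I[X_3;Y_3|X_1,X_2]\leq I[X_3;Y_2|X_1,X_2]$ or a similar $Y_2\to Y_3$ swap for $X_2$ that would be required to produce such a channel-swap step, so the outer region obtained from (\ref{eqn19})-(\ref{eqn23}) and the inner region from the coding scheme meet without it.
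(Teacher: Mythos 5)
Your proposal is correct and follows essentially the same route as the paper's proof: the same genie-aided converse with the side-information assignments and the swaps via (\ref{eqn17}), (\ref{eqn18}), and (\ref{eqn16}) combined with Lemma \ref{lemma1}, and the same two-stage achievability (peel off the very strong interferer, then MAC-decode the remaining pair) with the identical redundancy and Caratheodory arguments. The only cosmetic difference is your added commentary on why no $R_2+R_3$ bound appears, which the paper leaves implicit.
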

\begin{proof}
We shall first prove the converse.

\textit{Converse:} Equations (\ref{eqn19})-(\ref{eqn21}) are the usual outer bounds. Note that the condition (\ref{eqn5}) of Case 1 holds here too (in (\ref{eqn17})). Hence, the bound 
\begin{equation}
 \nonumber
R_1+R_2<I[X_{1},X_{2};Y_{1}|X_{3},Q]
\end{equation}is valid here too. Now,

{\small
\begin{align}
\nonumber
n(R_1+R_2) &\stackrel{(a)}{<} I[X_1^n;Y_1^n]+I[X_2^n;Y_2^n]+n \epsilon_n\\
\nonumber
&<I[X_1^n;Y_1^n,X_2^n,X_3^n]+I[X_2^n;Y_2^n,X_3^n]+n \epsilon_n\\
\nonumber
&\stackrel{(b)}{=}I[X_1^n;Y_1^n|X_2^n,X_3^n]+I[X_2^n;Y_2^n|X_3^n]+n \epsilon_n\\
\nonumber
&\stackrel{(c)}{\leq}I[X_1^n;Y_2^n|X_2^n,X_3^n]+I[X_2^n;Y_2^n|X_3^n]+n \epsilon_n\\
\nonumber
&=I[X_1^n,X_2^n;Y_2^n|X_3^n]+n \epsilon_n\\
\nonumber
&\stackrel{(d)}{=} h(Y_2^n|X_3^n)-\sum_{i=1}^{n} h(Y_{2i}|X_{1i},X_{2i},X_{3i})+n \epsilon_n\\
\nonumber
&\stackrel{(e)}{<} \sum_{i=1}^{n}\left( h(Y_{2i}|X_{3i})- h(Y_{2i}|X_{1i},X_{2i},X_{3i}) \right)\\
\nonumber
&~~~~~~~~~~~~~~~~~~~~~~~~~~~~~~~~~~~~~~~~~~~~~~~~~+n \epsilon_n\\
\nonumber
&=\sum_{i=1}^{n}I[X_{1i},X_{2i};Y_{2i}|X_{3i}]+n \epsilon_n\\
\nonumber
&=n\frac{1}{n}\sum_{i=1}^{n}I[X_{1i},X_{2i};Y_{2i}|X_{3i}]+n \epsilon_n\\
\Rightarrow R_1+R_2&< I[X_1,X_2;Y_2|X_3,Q]+ \epsilon_n
\end{align}}where, $(a)$ follows from Fano's inequality, $(b)$ follows from the fact that $X_1^n$, $X_2^n$ and $X_3^n$ are independent, $(c)$ follows from (\ref{eqn18}) and Lemma \ref{lemma1}, $(d)$ follows from the memoryless property of the channel and $(e)$ follows from the fact that removing conditioning increases entropy. Finally, taking the limit as $n \rightarrow \infty$, $P_e^{(n)} \rightarrow 0$, we have 

{\small
\begin{align}
\nonumber
R_1+R_2&< I[X_1,X_2;Y_2|X_3,Q].
\end{align}}%
Similarly, using (\ref{eqn16}), we can easily derive the outer bound in (\ref{eqn23}).

\textit{Achievability:} Fix $p_{Q}(q)p_{X_1|Q}(x_1|q)p_{X_2|Q}(x_2 |q)$ $p_{X_3|Q}(x_3 |q)$. At User-$i$, $i \in \{1,2,3\}$, generate $2^{nR_i}$ independent codewords $X_i^n (W_{ij})$, $j \in \{1,2,3,..2^{nR_i}\}$, of length $n$, generating each element i.i.d. $\sim \prod_{k=1}^{n} p_{X_i|Q} (x_{ik}|q)$, where, $W_{ij}$ indicates message-$j$ at User-$i$. First decode User-$3$'s message at Receiver-$2$, User-$2$'s message at Receiver-$3$ and User-$3$'s message at Receiver-$1$ by the usual weak-typical set decoding. They can be decoded with arbitrarily small probability of error if

{\small
\begin{align}
\label{redund3}
&R_3<I[X_3;Y_2|Q];~R_2<I[X_2;Y_3|Q];~R_3<I[X_3;Y_1|Q].
\end{align}}
Now, given that the messages of User-$3$, User-$2$ and User-$3$ are known at Receiver-$2$, Receiver-$3$ and Receiver-$1$ respectively, we perform the usual MAC-type decoding on messages of User-$1$ and User-$2$ at Receiver-$1$, User-$2$ and User-$1$ at Receiver-$2$ and User-$3$ and User-$1$ at Receiver-$3$. They can be decoded with arbitrarily small probability of error if (\ref{eqn19})-(\ref{eqn23}) and

{\small
\begin{align}
\nonumber
&R_1 < \min \{I[X_1;Y_2|X_2,X_3,Q], I[X_1;Y_3|X_2,X_3,Q]\}\\
\label{redund4}
&R_2 < I[X_2;Y_1|X_3,X_1,Q]
\end{align}}are satisfied. The conditions in (\ref{redund3}) and (\ref{redund4}) are redundant because (\ref{eqn13})-(\ref{eqn18}) are satisfied $\forall ~p_{X_1}(x_1)p_{X_2}(x_2)p_{X_3}(x_3)$. The cardinality of $Q$ follows from direct application of Caratheodory Theorem \cite{CoT-book}.
\end{proof}
\section{CAPACITY REGION OF 3-USER GIC WITH MIXED STRONG-VERY STRONG INTERFERENCE}
\label{sec4}
Consider a 2-user GIC with the following input-output equations 
\begin{equation}
\nonumber
Y_j=\sum_{i=1}^{2} h_{ij}X_i + N_j
\end{equation}where, $h_{ij}$ is the channel gain from User-$i$ to Receiver-$j$, $h_{ii}$=$1$, $h_{ij} \in \mathbb{C}$ $(j \neq i)$, $N_j$ $\sim$ ${\cal CN}(0,1)$ and $j \in \{1,2\}$. User-$i$ has a power constraint $P_i$ $(i \in \{1,2\})$.
We shall first state a lemma from \cite{GaY}, which was mentioned in the context of the 2-user GIC.
\begin{lemma}[\cite{GaY}]
\label{lemma2}
\begin{enumerate}
\item If $I[X_1;Y_1|X_2] \leq I[X_1;Y_2|X_2]$ when $X_i \sim $ ${\cal CN}(0,P_i)$ $(i \in \{1,2\})$, then $I[X_1;Y_1|X_2] \leq I[X_1;Y_2|X_2]~ \forall~ p_{X_1}(x_1)p_{X_2}(x_2)$. Similarly, when $I[X_2;Y_2|X_1] \leq I[X_2;Y_1|X_1]$ when $X_i \sim {\cal CN}(0,P_i)$ $(i \in \{1,2\})$, then $I[X_2;Y_2|X_1] \leq I[X_2;Y_1|X_1]~ \forall~ p_{X_1}(x_1)p_{X_2}(x_2)$.\\
\item If $I[X_1;Y_1|X_2] \leq I[X_1;Y_2]$ when $X_i \sim{\cal CN}(0,P_i)$ $(i \in \{1,2\})$, then $I[X_1;Y_1|X_2] \leq I[X_1;Y_2]~ \forall~ p_{X_1}(x_1)p_{X_2}(x_2)$. Similarly, when $I[X_2;Y_2|X_1] \leq I[X_2;Y_1]$ when $X_i \sim {\cal CN}(0,P_i)$ $(i \in \{1,2\})$, then $I[X_2;Y_2|X_1] \leq I[X_2;Y_1]~ \forall~ p_{X_1}(x_1)p_{X_2}(x_2)$.
\end{enumerate}
\end{lemma}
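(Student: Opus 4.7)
The plan is to reduce each conditional mutual information to a canonical additive-noise form $I[X_1;X_1+Z]$ and then lift the Gaussian-input hypothesis to arbitrary input distributions, either by a direct data-processing argument on a Markov chain (Part 1) or by combining a worst-case additive-noise step with that Markov-chain argument (Part 2).

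For Part 1, independence of $X_1$ and $X_2$ gives $I[X_1;Y_1|X_2]=I[X_1;X_1+N_1]$ and, by scale-invariance of mutual information, $I[X_1;Y_2|X_2]=I[X_1;h_{12}X_1+N_2]=I[X_1;X_1+N_2/h_{12}]$. Specialising to $X_i\sim\mathcal{CN}(0,P_i)$, the Gaussian hypothesis becomes $\log(1+P_1)\le\log(1+|h_{12}|^2P_1)$, i.e., $|h_{12}|^2\ge 1$, so that $N_2/h_{12}\sim\mathcal{CN}(0,1/|h_{12}|^2)$ has variance at most one. I would then decompose, in distribution, $N_1\stackrel{d}{=}N_2/h_{12}+N'$ with $N'\sim\mathcal{CN}(0,1-1/|h_{12}|^2)$ independent of $N_2/h_{12}$. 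The chain $X_1\to X_1+N_2/h_{12}\to X_1+N_1$ is then Markov for any $p_{X_1}$ (the distribution of $X_2$ is irrelevant since it has been conditioned out), and DPI yields $I[X_1;X_1+N_1]\le I[X_1;X_1+N_2/h_{12}]$. The symmetric statement with indices $1$ and $2$ swapped follows by the same argument.

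For Part 2, the Gaussian hypothesis reduces to the very-strong condition $|h_{12}|^2\ge 1+P_2$. The added difficulty is that $I[X_1;Y_2]=h(h_{12}X_1+X_2+N_2)-h(X_2+N_2)$ now depends on $p_{X_2}$ as well. My plan is twofold: (i) use the worst-case additive-noise principle to replace $X_2$ by a Gaussian $X_2^G\sim\mathcal{CN}(0,P_2)$ of the same variance, giving $I[X_1;h_{12}X_1+X_2+N_2]\ge I[X_1;h_{12}X_1+X_2^G+N_2]$; (ii) observe that for Gaussian $X_2^G$ the effective noise $(X_2^G+N_2)/h_{12}$ is itself Gaussian with variance $(1+P_2)/|h_{12}|^2\le 1$ by the very-strong hypothesis, so the Markov-chain DPI argument from Part 1 applies verbatim and delivers $I[X_1;h_{12}X_1+X_2^G+N_2]\ge I[X_1;X_1+N_1]$. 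Chaining (i) and (ii) closes Part 2, and the symmetric direction again follows identically.

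The hard step will be (i): establishing that replacing a non-Gaussian interferer by a Gaussian one of the same variance cannot raise $I[X_1;Y_2]$. For Gaussian $X_1$ this is the classical worst-additive-noise result, an immediate consequence of the entropy-power inequality combined with the variance-based upper bound on differential entropy. For arbitrary $X_1$ it is more delicate; I expect to either invoke the general form of that result or re-derive it by applying EPI separately to $h(h_{12}X_1+X_2+N_2)$ and bounding $h(X_2+N_2)\le\log(\pi e(P_2+1))$. Once step (i) is in hand, the remainder of Part 2 is a mechanical combination of scaling and the Part 1 Markov-chain construction.
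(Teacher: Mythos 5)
The paper does not actually prove Lemma~\ref{lemma2}; it is imported verbatim from \cite{GaY}, so there is no in-paper proof to compare against and your argument has to stand on its own. Part 1 of your proposal does: reducing $I[X_1;Y_1|X_2]$ to $I[X_1;X_1+N_1]$ and $I[X_1;Y_2|X_2]$ to $I[X_1;X_1+N_2/h_{12}]$, reading off $|h_{12}|\geq 1$ from the Gaussian hypothesis, and writing $N_1\stackrel{d}{=}N_2/h_{12}+N'$ to get a degraded channel and apply the data-processing inequality is exactly the standard argument (and the one used in \cite{GaY} for the strong-interference conditions). That part is correct.

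Part 2 has a genuine gap at your step (i). The inequality $I[X_1;h_{12}X_1+X_2+N_2]\geq I[X_1;h_{12}X_1+X_2^G+N_2]$ for \emph{arbitrary} $p_{X_1}$ is not the classical worst-additive-noise theorem: Diggavi--Cover (and the saddle-point property of the Gaussian mutual-information game) give $I[X;X+Z_G]\leq I[X_G;X_G+Z_G]\leq I[X_G;X_G+Z]$, i.e., Gaussian noise is worst \emph{for a Gaussian input} and Gaussian input is best \emph{against Gaussian noise}; neither statement says that Gaussian noise minimizes $I[X;X+Z]$ for a fixed non-Gaussian $X$, which is the corner of the square you need. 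Your proposed fallback does not close this either: bounding $h(X_2+N_2)\leq\log\bigl(\pi e(1+P_2)\bigr)$ and lower-bounding $h(Y_2)\geq h(Y_2|X_2)=h(h_{12}X_1+N_2)$ leaves you needing $h(X_1+N_2/h_{12})\geq h(X_1+N_1)$, which is false whenever $|h_{12}|>1$, since $N_1\stackrel{d}{=}N_2/h_{12}+N'$ and adding independent noise can only increase differential entropy. (The two estimates are also mutually antagonistic: making $h(X_2+N_2)$ large means $X_2$ must contribute substantially to $h(Y_2)$, which is precisely what dropping to $h(Y_2|X_2)$ throws away.) So as written, Part 2 is not proved; you would need either a correct per-distribution argument for the very-strong condition or a citation that actually covers the arbitrary-input case, rather than an appeal to a ``general form'' of the worst-noise result that, to my knowledge, does not exist.
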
 

Note that $I[X_1;Y_1|X_2]=I[X_1;X_1+N_1]$ and $I[X_1;Y_2]=I[X_1;X_2+h_{12}X_1+N_2]$.

Now, consider a 3-user Gaussian IC with the following input-output equations 
\begin{equation}
\nonumber
Y_j=\sum_{i=1}^{3} h_{ij}X_i + N_j
\end{equation}where, $h_{ij}$ is the channel gain from User-$i$ to Receiver-$j$, $h_{ii}$=$1$, $h_{ij} \in \mathbb{C}$ $(j \neq i)$, $N_j$ $\sim$ ${\cal CN}(0,1)$ and $j \in \{1,2,3\}$. User-$i$ has a power constraint $P_i$ $(i \in \{1,2,3\})$.

Let us consider the Case 1, where we assumed, without loss of generality, that User-$1$ causes a very strong interference at Receiver-$2$ and strong interference at Receiver-$3$, User-$2$ causes a very strong interference at Receiver-$3$ and strong interference at Receiver-$1$, and User-$3$ causes a very strong interference at Receiver-$1$ and strong interference at Receiver-$2$.
\begin{corollary}
\label{cor1}
 The capacity region of the 3-user GIC satisfying the conditions 

{\small
\begin{align}
\label{eqn24}
&|h_{12}|^2 \geq 1 + P_2 + |h_{32}|^2 P_3\\
\label{eqn25}
&|h_{23}|^2 \geq 1 + P_3 + |h_{13}|^2 P_1\\
\label{eqn26}
&|h_{31}|^2 \geq 1 + P_1 + |h_{21}|^2 P_2\\
\label{eqn27}
&|h_{13}| \geq 1\\
\label{eqn28}
&|h_{21}| \geq 1\\
\label{eqn29}
&|h_{32}| \geq 1
\end{align}}is given by

{\small
\begin{align}
\label{eqn30}
&R_1<I[X_{1G};Y_{1G}|X_{2G},X_{3G}]\\
\label{eqn31}
&R_2<I[X_{2G};Y_{2G}|X_{3G},X_{1G}]\\
\label{eqn32}
&R_3<I[X_{3G};Y_{3G}|X_{1G},X_{2G}]\\
\label{eqn33}
&R_1+R_2<I[X_{1G},X_{2G};Y_{1G}|X_{3G}]\\
\label{eqn34}
&R_2+R_3<I[X_{2G},X_{3G};Y_{2G}|X_{1G}]\\
\label{eqn35}
&R_3+R_1<I[X_{3G},X_{1G};Y_{3G}|X_{2G}]
\end{align}}where, $X_{1G} \sim {\cal CN}(0,P_1)$, $X_{2G} \sim {\cal CN}(0,P_2)$, $X_{3G} \sim {\cal CN}(0,P_3)$.
\end{corollary}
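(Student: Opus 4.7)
The plan is to prove Corollary~\ref{cor1} by specializing Theorem~\ref{thm1} to the Gaussian setting and then showing that Gaussian inputs with full power are both feasible and jointly optimal for the resulting region. The two steps are: (i)~verify that the channel-gain hypotheses (\ref{eqn24})--(\ref{eqn29}) imply, for every admissible product input distribution, the single-letter mixed strong/very-strong conditions (\ref{eqn1})--(\ref{eqn6}) needed by Theorem~\ref{thm1}; and (ii)~show that the capacity region produced by Theorem~\ref{thm1}, which is a union over $(Q,p_{X_1|Q},p_{X_2|Q},p_{X_3|Q})$, is exhausted by the Gaussian choice $X_{iG}\sim\mathcal{CN}(0,P_i)$ with $Q$ deterministic.

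For step~(i), I would first evaluate the single-letter mutual informations for $X_i\sim\mathcal{CN}(0,P_i)$ and observe that each channel-gain condition is exactly the Gaussian version of the corresponding mutual-information condition: for example (\ref{eqn24}) rewrites as $\log(1+|h_{12}|^2 P_1/(1+P_2+|h_{32}|^2 P_3))\geq\log(1+P_1)$, which is (\ref{eqn1}) for Gaussian inputs, while (\ref{eqn27}) rewrites as $\log(1+|h_{13}|^2 P_1)\geq\log(1+P_1)$, which is (\ref{eqn4}) for Gaussian inputs. To upgrade these Gaussian-input inequalities to arbitrary product distributions, I would apply Lemma~\ref{lemma2}. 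The strong-interference inequalities (\ref{eqn4})--(\ref{eqn6}) follow directly from part~1, since both sides are conditioned on the two interferers and the comparison reduces to a pair of scaled AWGN channels governed by $|h_{13}|,|h_{21}|,|h_{32}|\geq 1$; the very-strong inequalities (\ref{eqn1})--(\ref{eqn3}) require absorbing the second interferer (e.g.\ $h_{32}X_3$ in (\ref{eqn1})) into an auxiliary 2-user subproblem whose Gaussian hypothesis is furnished by the excess variance slack $|h_{32}|^2 P_3$ built into (\ref{eqn24}), after which part~2 of Lemma~\ref{lemma2} gives the inequality for all product distributions.

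For step~(ii), I would invoke the converse of Theorem~\ref{thm1} to obtain single-letter outer bounds (\ref{eqn7})--(\ref{eqn12}) and then bound each right-hand side by its Gaussian counterpart. Conditioned on $Q=q$, each term is either a P2P AWGN capacity or a Gaussian MAC sum-rate in the conditional second moments: e.g., $I[X_1;Y_1|X_2,X_3,Q=q]\leq\log(1+E[|X_1|^2\mid Q=q])$ and $I[X_1,X_2;Y_1|X_3,Q=q]\leq\log(1+E[|X_1|^2\mid Q=q]+|h_{21}|^2 E[|X_2|^2\mid Q=q])$, with equality in each case for conditionally Gaussian inputs at the full conditional power. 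Averaging over $Q$ via the concavity of $\log(1+\cdot)$ together with the power constraints $E[|X_i|^2]\leq P_i$ shows that every bound is dominated by its value at $X_i\sim\mathcal{CN}(0,P_i)$ with $Q$ deterministic, which matches (\ref{eqn30})--(\ref{eqn35}) exactly; the cardinality bound on $Q$ coming from Theorem~\ref{thm1} plays no role here because a point mass is already optimal. Achievability of the Gaussian region is then immediate by substituting the Gaussian product distribution into the achievability half of Theorem~\ref{thm1}.

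The main obstacle lies in step~(i), specifically the transfer of the very-strong conditions (\ref{eqn1})--(\ref{eqn3}) from Gaussian inputs to arbitrary product distributions, since $I[X_1;Y_2]$ depends non-trivially on both $p_{X_2}$ and $p_{X_3}$ through the non-Gaussian combined interference $X_2+h_{32}X_3$. Lemma~\ref{lemma2} is stated only for the 2-user GIC, so the key technical step is to recast the 3-user very-strong inequality as a 2-user instance to which part~2 of Lemma~\ref{lemma2} applies, using the extra variance margin built into (\ref{eqn24})--(\ref{eqn26}). Once that reduction is in place, the remainder of the proof proceeds mechanically from Theorem~\ref{thm1} and standard Gaussian-maximization arguments.
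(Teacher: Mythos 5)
Your proposal is correct and follows essentially the same route as the paper: reduce the very strong conditions (\ref{eqn1})--(\ref{eqn3}) to a 2-user instance of part 2 of Lemma \ref{lemma2} by treating the combined interference (e.g.\ $X_2+h_{32}X_3$) as a single channel input, use part 1 for the strong conditions (\ref{eqn4})--(\ref{eqn6}), and then conclude by Gaussian maximization of the bounds in (\ref{eqn7})--(\ref{eqn12}). The only difference is that you spell out the Gaussian-maximization and time-sharing step in more detail than the paper, which dispatches it in one sentence.
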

\begin{proof}
Condition (\ref{eqn24}) implies that condition (\ref{eqn1}) is satisfied for $X_i \sim {\cal CN}(0,P_i)$ $(i \in \{1,2,3\})$. This, in turn, implies that condition (\ref{eqn1}) is satisfied $\forall ~p_{X_1}(x_1)p_{X_2}(x_2)p_{X_3}(x_3)$, because we can treat $X_2+h_{32}X_3$ as a single channel input and apply the second part of Lemma \ref{lemma2} (note that $I[X_1;Y_1|X_2,X_3]=I[X_1;X_1+N_1]$ and $I[X_1;Y_2]=I[X_1;h_{12}X_1+X_2+h_{32}X_3+N_2]$). Similarly, the conditions (\ref{eqn25}) and (\ref{eqn26}) imply that the conditions (\ref{eqn2}) and (\ref{eqn3}) are respectively satisfied $\forall ~p_{X_1}(x_1)p_{X_2}(x_2)p_{X_3}(x_3)$. The conditions (\ref{eqn27})-(\ref{eqn29}) imply that (\ref{eqn4})-(\ref{eqn6}) are satisfied for $X_i \sim {\cal CN}(0,P_i)$ $(i \in \{1,2,3\})$. By application of the first part of Lemma \ref{lemma2}, the conditions (\ref{eqn27})-(\ref{eqn29}) imply that (\ref{eqn4})-(\ref{eqn6}) are satisfied $\forall ~p_{X_1}(x_1)p_{X_2}(x_2)p_{X_3}(x_3)$. Since Gaussian alphabets maximize the expressions in (\ref{eqn7})-(\ref{eqn12}), the capacity region is as given in (\ref{eqn30})-(\ref{eqn35}).
\end{proof}

Similarly, it can be easily shown that, for Case 2, when (\ref{eqn13})-(\ref{eqn18}) are satisfied for $X_i \sim {\cal CN}(0,P_i)$ $(i \in \{1,2,3\})$,i.e.,

{\small
\begin{align}
\label{eqn36}
&|h_{32}|^2 \geq 1 + P_2 + |h_{12}|^2 P_1\\
\label{eqn37}
&|h_{23}|^2 \geq 1 + P_3 + |h_{13}|^2 P_1\\
\label{eqn38}
&|h_{31}|^2 \geq 1 + P_1 + |h_{21}|^2 P_2\\
\label{eqn39}
&|h_{13}| \geq 1\\
\label{eqn40}
&|h_{21}| \geq 1\\
\label{eqn41}
&|h_{12}| \geq 1
\end{align}}
the capacity region of the channel is given by 

{\small
\begin{align}
\nonumber
&R_1<I[X_{1G};Y_{1G}|X_{2G},X_{3G}]\\
\nonumber
&R_2<I[X_{2G};Y_{2G}|X_{3G},X_{1G}]\\
\nonumber
&R_3<I[X_{3G};Y_{3G}|X_{1G},X_{2G}]\\
\nonumber
&R_1+R_2< \min \{I[X_{1G},X_{2G};Y_{1G}|X_{3G}],I[X_{1G},X_{2G};Y_{2G}|X_{3G}]\}\\
\nonumber
&R_3+R_1<I[X_{3G},X_{1G};Y_{3G}|X_{2G}]
\end{align}}
The above results for the GIC were independently proved in \cite{ChS}, but our proof is simpler. 

\section{EXTENSION TO THE $K$-USER DMIC CASE}
\label{sec5}
In this section, we generalize capacity region of the $3$-user DMIC with mixed strong-very strong interference to the $K$-user scenario $(K \geq 3)$. For the $K$-user case, each receiver is constrained to receive one strong interference and $(K-2)$ very strong interferences. 

Let $\overline{X_{i}}=\{X_j|j \neq i\}$ $(i,j \in \{1,2,..K\})$. Let  $l_j$ ($j \in \{1,2,..K\}, l_j \in \{1,2,..K\}$) denote the strong-interferer at Receiver-$j$, i.e., one which satisfies the condition

{\small
\begin{align}
\label{eqnkuser1}
I[X_{l_j};Y_{l_j}|\overline{X_{l_j}}] \leq I[X_{l_j};Y_j|\overline{X_{l_j}}] ~\forall ~p_{X_1}(x_1)\cdot \cdot ~p_{X_K}(x_K)
\end{align}}
and, let $\overline{l_j}$ denote the set of all very strong interferers at Receiver-$j$, i.e., which includes each User-$m$, $m \in \{1,2,..K\}$, that satisfies the condition

{\small
\begin{align}
\label{eqnkuser2}
I[X_m;Y_j] \geq I[X_m;Y_m|\overline{X_m}] ~\forall ~p_{X_1}(x_1) \cdot \cdot ~p_{X_K}(x_K).
\end{align}}
Let $X_{\overline{l_j}}=\{X_i|i \in \overline{l_j}\}$.

\begin{theorem}
 \label{thm3}
The capacity region of the $K$-user DMIC $(K \geq 3)$, where, each receiver is constrained to receive one strong interference and $(K-2)$ very strong interferences (the conditions in (\ref{eqnkuser1}) and (\ref{eqnkuser2}) are satisfied at every Receiver-$j$) is given by

{\small
\begin{align}
\label{thm3_eqn1}
&R_i < I[X_i;Y_i|\overline{X_i},Q] ~\forall ~i \in \{1,2,..K\}\\
\label{thm3_eqn2}
&R_j+R_{l_j}<I[X_j,X_{l_j};Y_j|X_{\overline{l_j}},Q] ~\forall ~j \in \{1,2,..K\} 
\end{align}}for some distribution $p_Q(q)p_{X_1|Q} \cdot \cdot ~p_{X_K|Q}(x_K|q)$, where, $Q$ is the time-sharing random-variable with $|Q| \leq 2K+1$. 
\end{theorem}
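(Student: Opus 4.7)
The plan is to mimic the two-case analysis of Theorems \ref{thm1} and \ref{thm2} but now in a single stroke for arbitrary $K \geq 3$, exploiting the fact that ``one strong plus $(K-2)$ very strong interferers at each receiver'' lets one peel off all very strong interferers first and then reduce to a two-user MAC-type subproblem at each receiver. I would split the argument into the converse and the achievability.

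For the converse, the individual bounds (\ref{thm3_eqn1}) follow from the usual Fano-inequality/data-processing argument already used for (\ref{eqn7})--(\ref{eqn9}). For each sum bound (\ref{thm3_eqn2}) at Receiver-$j$, I would start from $n(R_j+R_{l_j}) < I[X_j^n;Y_j^n] + I[X_{l_j}^n;Y_{l_j}^n] + n\epsilon_n$, then add $X_{\overline{l_j}}^n$ as genie side information to the first mutual information and $X_j^n,X_{\overline{l_j}}^n$ to the second, convert both into conditional mutual informations using independence of the codebooks, and apply Lemma \ref{lemma1} in conjunction with (\ref{eqnkuser1}) to replace $Y_{l_j}^n$ by $Y_j^n$ in the second term. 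Combining gives $I[X_j^n,X_{l_j}^n;Y_j^n|X_{\overline{l_j}}^n]$; single-letterizing via memorylessness and ``removing conditioning increases entropy'' and introducing a time-sharing $Q$ completes the step, reproducing verbatim the chain $(a)$--$(e)$ of Theorems \ref{thm1}--\ref{thm2} with extended bookkeeping for the $K-2$ side-information codewords.

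For achievability, I would fix $p_Q p_{X_1|Q}\cdots p_{X_K|Q}$ and generate $2^{nR_i}$ i.i.d.\ codewords at User-$i$. At Receiver-$j$, I would first decode the $K-2$ very strong interferers $\{X_m^n : m \in \overline{l_j}\}$ by ordinary weak typical set decoding, treating all remaining transmitted signals as noise. This succeeds if $R_m < I[X_m;Y_j|Q]$ for every $m \in \overline{l_j}$, and this is automatic: by (\ref{eqnkuser2}) together with (\ref{thm3_eqn1}) applied to User-$m$, $R_m < I[X_m;Y_m|\overline{X_m},Q] \leq I[X_m;Y_j|Q]$. Having peeled off the very strong interferers, Receiver-$j$ faces a two-user MAC in $(X_j,X_{l_j})$ with $X_{\overline{l_j}}^n$ known, whose standard decoding bounds are (\ref{thm3_eqn1}) for $R_j$, the sum bound (\ref{thm3_eqn2}), and a third constraint $R_{l_j} < I[X_{l_j};Y_j|\overline{X_{l_j}},Q]$. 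The last of these is redundant: combining (\ref{eqnkuser1}) with Lemma \ref{lemma1} gives $I[X_{l_j};Y_{l_j}|\overline{X_{l_j}},Q] \leq I[X_{l_j};Y_j|\overline{X_{l_j}},Q]$, and the left side already dominates $R_{l_j}$ by the individual bound applied to User-$l_j$. The cardinality bound $|Q| \leq 2K+1$ follows from a direct application of Caratheodory to the $K$ individual plus $K$ sum-rate functionals.

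The main obstacle is purely notational rather than conceptual: one must keep straight, for each $j$, the three disjoint roles played by User-$j$, the strong interferer $l_j$, and the very strong set $\overline{l_j}$, and verify that at every Receiver-$j$ the hypotheses (\ref{eqnkuser1})--(\ref{eqnkuser2}) are invoked in the correct direction when checking redundancy of the peeling-stage and MAC-stage side conditions. Beyond this bookkeeping there is no new idea; the proof scales gracefully because at every receiver the decoding order ``all very strong interferers first, then a two-user MAC'' is uniformly available, which is precisely the structural content of the mixed strong--very strong hypothesis.
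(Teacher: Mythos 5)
Your proposal is correct and follows essentially the same route as the paper's proof: the same genie-aided converse (adding $X_{\overline{l_j}}^n$ and $(X_j^n,X_{\overline{l_j}}^n)$ as side information, invoking (\ref{eqnkuser1}) with Lemma \ref{lemma1}, then single-letterizing) and the same peel-off-the-very-strong-interferers-then-two-user-MAC achievability with the identical redundancy argument and Caratheodory step. The only cosmetic difference is that you decode the $K-2$ very strong interferers in parallel, each treating the others as noise, whereas the paper decodes them successively; both sets of decoding conditions reduce to $R_m < I[X_m;Y_j|Q]$, which (\ref{eqnkuser2}) guarantees, so the two are interchangeable.
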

\begin{proof}
The converse and achievability are given in the Appendix.
\end{proof}

The extension of the capacity region of $3$-user GIC with mixed strong-very strong interference to the $K$-user case where each receiver receives one strong interference and $(K-2)$ very strong interferences is straight-forward from (\ref{eqnkuser1}), (\ref{eqnkuser2}), Lemma \ref{lemma2} and Theorem \ref{thm3}.
\section{DISCUSSION}
\label{sec6}
The capacity region of a Gaussian channel when the inputs take values from finite complex constellations, with uniform distribution over the constellation, is called the Constellation Constrained (CC) capacity \cite{Big}. The CC capacity for the Gaussian-MAC (G-MAC) was analyzed in \cite{HaR}, for the 2-user GIC with strong interference in \cite{KnS} and \cite{AbR}. With finite constellations, suboptimality of Frequency Division Multiple Access (FDMA) scheme was shown for the G-MAC in \cite{HaR} and for the 2-user GIC with strong interference in \cite{AbR}. A similar analysis with finite constellations for the class of $K$-user GIC considered here is an interesting direction to pursue. An important direction of future research is to design practical schemes that would take us close to the CC capacity. This problem has been open even for the $2$-user GIC with strong interference.

\section*{Acknowledgement}
The authors wish to thank Dr. Rajesh Sundaresan and T. Damodaram Bavirisetti for the useful discussions. This work was supported  partly by the DRDO-IISc program on Advanced Research in Mathematical Engineering through a research grant as well as the INAE Chair Professorship grant to B.~S.~Rajan.

\begin{appendix}
\label{append1}
\vspace{-0.3cm}
\begin{equation}
\nonumber
\mbox{\textbf{Proof of Theorem \ref{thm3}}}
\end{equation}
\textit{Converse:} Equation (\ref{thm3_eqn1}) is the usual outer bound. Now, at each Receiver-$j$, the following steps hold good.

{\small
\begin{align}
\nonumber
n(R_j+R_{l_j}) &\stackrel{(a)}{<} I[X_j^n;Y_j^n]+I[X_{l_j}^n;Y_{l_j}^n]+n \epsilon_n
\end{align}
\begin{align}
\nonumber
&<I[X_j^n;Y_j^n,X_{\overline{l_j}}^n]+I[X_{l_j}^n;Y_{l_j}^n,X_j^n,X_{\overline{l_j}}^n]+n \epsilon_n\\
\nonumber
&\stackrel{(b)}{=}I[X_j^n;Y_j^n|X_{\overline{l_j}}^n]+I[X_{l_j}^n;Y_{l_j}^n|X_j^n,X_{\overline{l_j}}^n]+n \epsilon_n\\
\nonumber
&\stackrel{(c)}{\leq}I[X_j^n;Y_j^n|X_{\overline{l_j}}^n]+I[X_{l_j}^n;Y_j^n|X_j^n,X_{\overline{l_j}}^n]+n \epsilon_n\\
\nonumber
&=I[X_j^n,X_{l_j}^n;Y_j^n|X_{\overline{l_j}}^n]+n \epsilon_n\\
\nonumber
&\stackrel{(d)}{=} h(Y_j^n|X_{\overline{l_j}}^n)-\sum_{i=1}^{n} h(Y_{ji}|X_{1i},X_{2i} \cdot \cdot ~X_{Ki})\\
\nonumber
&~~~~~~~~~~~~~~~~~~~~~~~~~~~~~~~~~~~~~~~~~~~~~~~~~+n \epsilon_n\\
\nonumber
&\hspace{-0.2cm}\stackrel{(e)}{<} \sum_{i=1}^{n}\left( h(Y_{ji}|X_{\overline{l_j}i})- h(Y_{ji}|X_{1i},X_{2i}, \cdot \cdot ~X_{Ki}) \right)\\
\nonumber
&~~~~~~~~~~~~~~~~~~~~~~~~~~~~~~~~~~~~~~~~~~~~~~~~~+n \epsilon_n\\
\nonumber
&=\sum_{i=1}^{n}I[X_{ji},X_{l_ji};Y_{ji}|X_{\overline{l_j}i}]+n \epsilon_n\\
\nonumber
&=n\frac{1}{n}\sum_{i=1}^{n}I[X_{ji},X_{l_ji};Y_{ji}|X_{\overline{l_j}i}]+n \epsilon_n\\
\nonumber
\Rightarrow R_j+R_{l_j}&< I[X_j,X_{l_j};Y_j|X_{\overline{l_j}},Q]+ \epsilon_n
\end{align}}where, $(a)$ follows from Fano's inequality, $(b)$ follows from the fact that $X_j^n$, $X_{l_j}^n$ and $X_{\overline{l_j}}^n$ are independent, $(c)$ follows from (\ref{eqnkuser1}), Lemma \ref{lemma1} and the fact that $(X_j^n,X_{\overline{l_j}}^n)=\overline{X_{l_j}}^n$, $(d)$ follows from the memoryless property of the channel and $(e)$ follows from the fact that removing conditioning increases entropy. Finally, taking the limit as $n \rightarrow \infty$, $P_e^{(n)} \rightarrow 0$, we have 

{\small
\begin{align}
\nonumber
R_j+R_{l_j}&< I[X_j,X_{l_j};Y_j|X_{\overline{l_j}},Q].
\end{align}}
\textit{Achievability:}  Fix $p_{Q}(q)p_{X_1|Q}(x_1|q) \cdot \cdot ~p_{X_K|Q}(x_K |q)$. At User-$i$, $i \in \{1,2,..K\}$, generate $2^{nR_i}$ independent codewords $X_i^n (W_{ip})$, $p \in \{1,2,3,..2^{nR_i}\}$, of length $n$, generating each element i.i.d. $\sim \prod_{k=1}^{n} p_{X_i|Q}(x_{ik}|q)$, where, $W_{ip}$ indicates message-$p$ at User-$i$. Let $l_j=\{l_j(1),l_j(2),..l_j(K-2)\}$, where, $l_j(m)$ denotes the $m^{\mbox{th}}$ very strong interferer. At Receiver-$j$, decode $l_j(1)$, $l_j(2)$, $\cdot \cdot$, $l_j(K-2)$ in succession using the usual weak-typical set decoding. They can be decoded with arbitrarily small probability of error if

{\small
\begin{align}
\nonumber
&R_{l_j(1)}<I[X_{l_j(1)};Y_j|Q]\\
\nonumber
&R_{l_j(2)}<I[X_{l_j(2)};Y_j|X_{l_j(1)},Q]\\
\label{redundthm3_1}
&~~~\vdots\\
\nonumber
&R_{l_j(K-2)}<I[X_{l_j(K-2)};Y_j|X_{l_j(1)},X_{l_j(2)}, \cdot \cdot ~X_{l_j(K-3)},Q]
\end{align}}
where, $I[X_{l_j(m)};Y_j|X_{l_j(1)},X_{l_j(2)}, \cdots ~X_{l_j(m-1)},Q]$ $>$ $I[X_{l_j(m)};Y_j|Q]$ $\geq$ $I[X_{l_j(m)};Y_{l_j(m)}|\overline{X_{l_j(m)}},Q]$ (from (\ref{eqnkuser2})), for every $m$ $\in$ $\{1,2,..~K-2\}$. Now, given that the messages of the very strong interferers are known at Receiver-$j$, we perform the usual MAC-type decoding on messages of User-$j$ and User-$l_j$ at Receiver-$j$. They can be decoded with arbitrarily small probability of error if (\ref{thm3_eqn1}), (\ref{thm3_eqn2}) and

{\small
\begin{align}
\label{redundthm3_2}
&R_{l_j} < I[X_{l_j};Y_j|\overline{X_{l_j}},Q]
\end{align}}are satisfied. The conditions in (\ref{redundthm3_1}) and (\ref{redundthm3_2}) are redundant because (\ref{eqnkuser1}) and (\ref{eqnkuser2}) are satisfied $\forall ~p_{X_1}(x_1)p_{X_2}(x_2) \cdots ~p_{X_K}(x_K)$. The cardinality of $Q$ follows from direct application of Caratheodory Theorem \cite{CoT-book}.
\end{appendix}
\end{document}